\documentclass[12pt,a4paper]{article}

\usepackage{flonat-wp}
\usepackage[harvard]{flonat-bib}

\newcommand{\deff}{\delta_{\mathrm{eff}}}

\title{Scaling, Lock-In, and Proxy Compliance:\\A Political Economy of Responsible AI}
\author{Florian A. D. Burnat\thanks{University of Bath, Bath, UK. Email: fadb20@bath.ac.uk.}
\and Brittany I. Davidson\thanks{University of Bath, Bath, UK. Email: bid23@bath.ac.uk.}}
\date{30 July 2026}

\begin{document}

\begin{titlepage}
    \maketitle
    \begin{abstract}
AI accountability at scale is an institutional problem: who can observe, verify, and change deployed systems. We develop a sequential political-economy model in which an AI vendor chooses auditability and substantive mitigation, a deployer monitors after adoption while facing switching costs, and enforcement depends on verifiable evidence. Anticipating the deployer's monitoring response, the vendor may stop at an observable procurement floor while mitigating below the social first best, producing a \emph{proxy-compliance equilibrium}. We characterize the unique interior equilibrium and the corner in which harm is fully mitigated. Independent audit rights raise enforcement exposure directly; portability restores deployer leverage; incident reporting adds a regulator-visible evidence channel; and outcome-linked liability creates incentives that do not depend on vendor-controlled detection. The results explain why documentation and standardized evaluations can coexist with persistent post-deployment harms, and generate testable implications for monitoring, mitigation, and the gap between formal compliance and operational outcomes.

        \bigskip
        \noindent\textbf{Keywords:} Responsible AI; algorithmic auditing; AI governance and accountability; lock-in and switching costs.
    \end{abstract}
    \setcounter{page}{0}
    \thispagestyle{empty}
\end{titlepage}

\clearpage
\doublespacing

\section{Introduction}
\label{sec:intro}

\paragraph{Motivating puzzle.}
Responsible AI checklists, model cards, conformity documentation, and standardized evaluations have become common; however, recurring harms and accountability gaps remain. Four observations sharpen this puzzle. \emph{First}, the EU AI Act places extensive documentation and risk-management obligations on providers, but many relevant inputs---training data, evaluation protocols, red-team logs, and access conditions---remain under provider control \parencite{Veale2021-uu,Schuett2024-am}. \emph{Second}, independent reproduction of vendor-reported safety and fairness claims is often infeasible under black-box or contractually restricted access \parencite{Mitchell2019-ee,Mokander2024-hd,Casper2024-em}. \emph{Third}, audits can be scoped around negotiated data, metrics, and publication rights rather than independently imposed tests \parencite{Raji2019-xg,Costanza-Chock2022-sn}. \emph{Fourth}, once a deployer has integrated a model into workflows, fine-tuning pipelines, prompt libraries, and downstream tooling, switching providers can be costly. Evidence from frontier-AI firms' voluntary commitments is consistent with the resulting gap between visible reporting and underlying performance \parencite{Wang2025-tn}. The puzzle is therefore not merely why organizations fail to implement an otherwise complete ethics framework, but why visible compliance and limited mitigation may coexist as rational choices.

\paragraph{Thesis.}
We argue that proxy compliance can arise when enforcement requires verifiable evidence, the vendor controls part of the evidence-generating infrastructure, and post-integration switching costs weaken the deployer's recourse. Lock-in reduces monitoring; weaker monitoring lowers verifiable detection; and lower enforcement exposure reduces the vendor's return to substantive mitigation. Lock-in is an amplifier rather than a necessary primitive; constrained independent detection can sustain the mechanism even when switching costs are moderate.

\paragraph{Contributions.}
First, we develop a vendor--deployer model in which auditability, mitigation, and monitoring are endogenous, conditional on institutional parameters for procurement, switching costs, and enforcement. Second, we solve the sequential game while allowing the vendor to anticipate how mitigation changes the deployer's monitoring response. This yields a unique interior equilibrium, a transparent full-mitigation corner, and policy comparative statics. Third, we connect the formal levers to current governance institutions and specify empirical measures and settings in which the mechanism can be tested. The contribution is a tractable institutional explanation of the gap between observable responsibility signals and harm reduction, not a claim that any single governance instrument is sufficient. In this paper, scale is the institutional setting rather than an endogenous variable; the formal analysis isolates how auditability, switching costs, and evidence-dependent enforcement shape mitigation once AI systems are embedded in organizational workflows.

\paragraph{Roadmap.}
Section~\ref{sec:related} positions the argument. Section~\ref{sec:model} introduces the model.
Section~\ref{sec:analysis} characterizes the equilibrium. Section~\ref{sec:policy} studies policy levers and welfare.
Section~\ref{sec:pred} develops empirical implications, and Section~\ref{sec:discussion} discusses the scope and limitations.

\begin{table*}[t]
    \centering
    \caption{Main results and policy interpretation. Formal derivations are in the appendices.}
    \label{tab:results}
    \small
        \begin{tabular}{@{}p{0.09\textwidth}p{0.42\textwidth}p{0.43\textwidth}@{}}
        \toprule
        Result & Formal claim & Institutional interpretation \\
        \midrule
        Prop.~\ref{prop:proxy} & weak baseline detection plus high lock-in yields proxy compliance
         & visible compliance can coexist with low mitigation \\
        Prop.~\ref{prop:account} & more independent detection or less lock-in raises mitigation
         & accountability needs evidence and credible recourse \\
        Thm.~\ref{thm:audit} & sufficiently strong audit rights work at any lock-in level
         & independent evaluation can bypass deployer weakness \\
        Prop.~\ref{prop:portability} & portability raises mitigation through monitoring
         & standards restore deployer leverage \\
        Prop.~\ref{prop:incident} & incident reporting adds an independent evidence channel
         & credible reports reduce dependence on vendor-gated access \\
        Prop.~\ref{prop:liability} & outcome-linked exposure raises mitigation independently of detection
         & liability retains bite when verification is weak \\
        \bottomrule
    \end{tabular}
\end{table*}

\section{Related Work}
\label{sec:related}

This argument relates to responsible AI governance, algorithmic auditing, the economics of switching costs and lock-in, and the political economy of regulation under asymmetric information.

\paragraph{Responsible AI as institutional practice.}
A growing body of work documents the gap between Responsible AI principles and practice. \textcite{Jobin2019-pk} survey 84 AI ethics guidelines, finding convergence on high-level principles but divergence in interpretation and a near-universal absence of enforcement mechanisms. This landscape has been characterized as ``ethics washing''---voluntary frameworks that create reputational benefits without constraining behavior \parencite{Hagendorff2020-vs}. \textcite{Raji2022-pe} sharpen the diagnosis by cataloging functionality failures that no current audit regime reliably detects, and ML research incentives empirically select against harm-reduction methodology \parencite{Birhane2022-sy}---both findings consistent with a system in which signals of responsibility are easier to produce than the underlying substance. \textcite{Crawford2022-wt} maps the political economy of AI production, documenting how concentration of compute, data, and infrastructure creates the power asymmetries that enable this dynamic. At the organizational level, internal accountability practices are shaped more by what is demonstrable to external audiences than by what effectively reduces harm \parencite{Raji2020-lq, Moss2021-xq}. \textcite{Selbst2019-pa} identify ``abstraction traps''---when responsible AI is operationalized through portable, formalized metrics detached from deployment context, it produces the appearance of compliance without addressing situated harms. In our model, proxy compliance is the equilibrium where observable signals are cheaper than substantive mitigation and enforcement depends on verifiable evidence.

\paragraph{Signaling and corporate social responsibility.}
The mechanism we label \emph{proxy compliance} has a well-developed formal ancestor in the economics of signaling and corporate social responsibility. \textcite{Benabou2010-ml} decompose visibly responsible behavior into intrinsic, material, and reputational-signaling motives, and show that in equilibrium the observable action can dominate the unobservable effort when audiences condition on the action alone---even when the two are conceptually distinct. This is the structure we specialize: when deployers, regulators, and civil society condition on \emph{observable} documentation and standardized evaluations, vendors face an incentive problem in which producing the visible signal is strategically separable from substantive harm reduction. The contribution relative to \textcite{Benabou2010-ml} is domain-specific: we embed the signaling mechanism in a vendor--deployer--regulator game with switching costs and evidence-based enforcement, and derive comparative statics for policy levers that are specific to the AI value chain (audit rights, portability, incident reporting, and outcome-linked liability) rather than generic CSR. Recent empirical work on frontier-AI firms' voluntary White House commitments documents widespread partial compliance and reporting-via-marketing \parencite{Wang2025-tn}---a pattern our equilibrium predicts when audit rights and portability are weak.

\paragraph{Audits, evaluation, and verifiability.}
Algorithmic auditing has been proposed as a key accountability mechanism \parencite{Sandvig2014-kw, Raji2019-xg}; however, it faces structural limitations when vendors control access. \textcite{Mokander2022-ul} provide a taxonomy of audit types that underwrites our auditability variable $t$: their distinction between vendor self-assessment and third-party conformity assessment maps directly onto the detection function~$p(t,m)$. The auditing ecosystem itself lacks standardization, independence guarantees, and enforcement power \parencite{Costanza-Chock2022-sn}; \textcite{Floridi2022-ax} propose an operational conformity-assessment procedure (capAI) whose feasibility rests on the same vendor cooperation that our model treats as endogenous. For large language models, auditability constraints are severe \parencite{Mokander2024-hd}: model opacity, generality, and absent harm benchmarks. Documentation standards such as model cards \parencite{Mitchell2019-ee} are necessary but insufficient---they are only as reliable as the information vendors choose to include, making them vulnerable to strategic disclosure. \textcite{Wachter2021-qi} provide the doctrinal ground for this limitation: automated compliance checks are evidentiarily incomplete substitutes for contextual legal assessment, which justifies treating $p(t,m)$ as a proxy rather than a truth function. In our framework, these observations correspond to the detection probability $p(t,m)$ being substantially vendor-determined, giving vendors scope to limit auditability and thereby constrain enforcement. \textcite{Casper2024-em} show that black-box API access is \emph{insufficient} for rigorous audits of frontier systems---internal access to weights, training data, and evaluation logs is required for safety and fairness claims to be credibly tested. Consistent with this, \textcite{Burnat2026-xq} document concrete ``audit blind-spots'' across major platforms (X, Reddit, TikTok, Meta) where API restrictions block the independent verification that transparency mandates presume---an empirical instance of vendor-controlled $p(t,m)$. \textcite{Raji2022-ks} argue on this basis for a designed third-party audit ecosystem with enforcement powers, and \textcite{Schuett2025-je} argues for a standing internal-audit function inside frontier-AI firms with direct board access. Our model treats the auditability variable $t$ as precisely the lever these authors identify: moving from vendor-gated black-box evaluation toward independent internal access raises $p(t,m)$ at the margin and, under conditions we characterize, tips the equilibrium from proxy compliance to accountability.

\paragraph{Market structure, lock-in, and bargaining power.}
The economics of switching costs provides the foundation for our lock-in mechanism. Switching costs give firms ex post market power over locked-in customers \parencite{Klemperer1995-zn}, and dynamic competition with switching costs leads to an ``invest then harvest'' pattern \parencite{Farrell1988-rs}. Network externalities create self-reinforcing concentration \parencite{Katz1985-zg}, which in AI manifests as ecosystem lock-in around dominant foundation-model providers; foundation-model scale concentrates development capacity among a few well-resourced actors \parencite{Bender2021-ct}, supplying the market-structure precondition for the switching cost~$s$ we formalize. ``Openness'' in AI is often strategically deployed to consolidate rather than democratize power \parencite{Gray-Widder2023-vv}, and regulatory compliance costs can paradoxically reinforce concentration \parencite{Engler2023-rq}. Procurement frameworks are poorly suited to evaluating AI systems \parencite{Mulligan2019-yt}, weakening deployer-side discipline even before lock-in takes effect. Our model integrates these insights: switching cost~$s$ dampens effective recourse from monitoring, removing the market discipline that would otherwise incentivize vendor mitigation.

\paragraph{Evidence-based enforcement.}
The political economy of regulation under asymmetric information is well established: regulation may be captured by industry \parencite{Stigler1971-uz}; principal-agent rents distort outcomes when information is asymmetric \parencite{Laffont1993-cc}; and optimal enforcement institutions depend on the power asymmetry between regulators and firms \parencite{Glaeser2003-to}. Intermediary auditors can themselves be captured, reducing rather than raising oversight effectiveness \parencite{Tirole1986-xu}. The choice between liability and safety regulation depends on information structures \parencite{Shavell1984-mf}---directly relevant to our comparison of evidence-based enforcement versus outcome-linked liability. In the AI context, the Act's heavy reliance on vendor self-assessment and harmonized-standard compliance is well documented, as is its broader risk-management architecture \parencite{Veale2021-uu,Schuett2024-am}. Subsequent 2024--2025 scholarship has identified three structural weaknesses. First, scholarship on the AI Office, AI Board, and Scientific Panel flags under-resourcing relative to the Act's evidentiary burden \parencite{Hacker2023-nc,Novelli2024-rt}. Second, harmonized-standard delegation routes the regulator's burden of proof through vendor-influenced standards bodies \parencite{Gornet2024-ie,Laux2024-eg}. Third, the Fundamental Rights Impact Assessment adds a rights-based layer above conformity assessment, and incorporating civil society into the audit ecosystem can guard against capture \parencite{Mantelero2024-cg,Hartmann2024-ob}. The NIST AI Risk Management Framework's generative-AI profile \parencite{National-Institute-of-Standards-and-Technology-US-2024-am} is now the operational benchmark against which AI risk management is graded in practice. Pulling these threads together, evidence-based enforcement makes penalties contingent on detection probability, so vendors face an incentive to limit transparency whenever the marginal adoption benefit from auditability falls below the marginal increase in expected penalties.


\section{Model}
\label{sec:model}

The model separates \emph{visible compliance} from \emph{substantive mitigation}. Auditability~$t$ helps the vendor satisfy procurement requirements but also facilitates enforcement; mitigation~$e$ directly reduces harm. The deployer can monitor after adoption, but monitoring is valuable only when evidence creates credible recourse. Switching cost~$s$ weakens that recourse. Independent audit capacity~$\Delta$, reporting-generated exposure~$\mu\rho$, and outcome-linked exposure~$\lambda_O F_O$ act through different institutional channels (Figure~\ref{fig:mechanism}).

\begin{figure}[t]
    \centering
    \fbox{\parbox{0.91\columnwidth}{\centering\small
    \textbf{Proxy-compliance mechanism}\\[3pt]
    $s\uparrow \Rightarrow \deff\downarrow \Rightarrow m^\ast\downarrow
    \Rightarrow p\downarrow \Rightarrow e^\ast\downarrow \Rightarrow \E[h]\uparrow$
    \\[6pt]
    \textbf{Detection-independent policy channels}\\[3pt]
    $\Delta\uparrow,\qquad \mu\rho\uparrow,\qquad \lambda_O F_O\uparrow
    \quad\Longrightarrow\quad e^\ast\uparrow$
    }}
    \caption{Lock-in weakens the deployer-monitoring channel. Audit rights~$\Delta$, effective incident-reporting
    coverage~$\rho$ with sanction scale~$\mu$, and outcome-linked exposure~$\lambda_O F_O$ add enforcement channels
    that the vendor cannot suppress through ordinary auditability choices. Portability works by lowering~$s$.}
    \label{fig:mechanism}
\end{figure}

\subsection{Players and Timeline}

An institutional environment fixes the procurement floor~$\bar t$, evidence-based penalty scale~$\lambda F$, independent audit capacity~$\Delta$, and switching-cost environment~$s$. The strategic subgame is:

\begin{enumerate}[leftmargin=*,itemsep=2pt,topsep=2pt]
    \item The vendor chooses auditability $t\in[0,1]$ and mitigation $e\ge0$.
    \item The deployer observes $(t,e)$, chooses adoption $a\in\{0,1\}$, and if adopting chooses monitoring $m\ge0$.
    Integration creates switching costs represented by~$s$.
    \item Harm $h\in\{0,1\}$ is realized. Evidence is generated with probability $p(t,m)$, and the regulator applies the
    specified enforcement rule when actionable evidence exists.
\end{enumerate}

\begin{table}[t]
    \centering
    \caption{Key notation.}
    \label{tab:notation}
    \small
    \begin{tabular}{@{}p{0.22\columnwidth}p{0.69\columnwidth}@{}}
        \toprule
        Symbol & Meaning \\
        \midrule
        $t,e,m$ & vendor auditability, vendor mitigation, deployer monitoring \\
        $s$ & switching cost; $\deff=\delta/(1+s)$ is effective recourse \\
        $q(\theta,e)$ & harm probability \\
        $p(t,m)$ & vendor/deployer evidence-generation probability \\
        $\Delta$ & independent audit exposure, added to $p$ \\
        $\lambda F$ & evidence-based penalty scale \\
        $\mu\rho$ & reporting-generated exposure ($\rho$: effective coverage) \\
        $\lambda_O F_O$ & outcome-linked exposure \\
        $\bar t$ & procurement/compliance floor \\
        $\kappa$ & deployer harm exposure \\
        $\alpha,\beta$ & sensitivity of detection to $t,m$ \\
        $k_t,k_e,k_m$ & quadratic cost coefficients \\
        \bottomrule
    \end{tabular}
\end{table}

\subsection{Risk, Detection, and Payoffs}

We condition on a common-knowledge risk parameter $\theta\in(0,1)$. Mitigation reduces the probability of harm, $\Prob(h=1\mid\theta,e)=q(\theta,e)$, and actionable evidence arises with probability $p(t,m)$, which increases in both auditability and monitoring.

The vendor earns adoption benefit $B>0$ and incurs convex auditability and mitigation costs:
\begin{equation}
\label{eq:vendor}
    U_V=aB-c_t(t)-c_e(e)-\lambda F\,q(\theta,e)\,[p(t,m)+\Delta].
\end{equation}
Here $p+\Delta$ is an enforcement-exposure index rather than necessarily a literal probability. The deployer obtains value $V$, pays integration cost $I$ and monitoring cost $c_m(m)$, and can recover a fraction~$\delta$ of harm exposure when evidence creates recourse. Lock-in reduces effective recourse to $\deff=\delta/(1+s)$:
\begin{equation}
\label{eq:deployer}
    U_D=a\!\left[V-I-c_m(m)-\kappa q(\theta,e)\bigl(1-\deff p(t,m)\bigr)\right].
\end{equation}
The regulator is represented through the institutional parameters $(\lambda F,\Delta)$; Section~\ref{sec:policy} asks how changing those parameters alters equilibrium.

\subsection{Tractable Specification}

\begin{assumption}[Functional forms]
\label{ass:mvp}
    For the comparative statics,
    \[
        \begin{aligned}
            q(\theta,e)&=\max\{0,\theta-e\},\\
            p(t,m)&=1-\exp(-\alpha t-\beta m),
        \end{aligned}
    \]
    and $c_j(x)=k_jx^2/2$ for $j\in\{t,e,m\}$, with all cost and sensitivity parameters positive.
\end{assumption}

\begin{assumption}[Procurement threshold]
\label{ass:adoption}
    Adoption occurs if and only if $t\ge\bar t$. The floor~$\bar t$ is an exogenous procurement or certification     requirement. We condition on adoption and state the vendor participation condition in Appendix~\ref{app:equilibrium}.
\end{assumption}

We use subgame-perfect equilibrium: the deployer chooses monitoring after observing vendor actions, and the vendor anticipates that monitoring response.

\section{Equilibrium Analysis}
\label{sec:analysis}

\subsection{Deployer Monitoring}

\begin{lemma}[Deployer best response]
\label{lem:deployer}
    For $q(\theta,e)>0$ and $\deff>0$, the deployer has a unique interior monitoring choice~$m^\ast$ satisfying
    \begin{equation}
    \label{eq:foc_m}
        k_m m^\ast=\kappa q(\theta,e)\deff\beta
        \exp(-\alpha t-\beta m^\ast).
    \end{equation}
    Monitoring decreases with switching cost~$s$ and increases with deployer harm exposure~$\kappa$.
\end{lemma}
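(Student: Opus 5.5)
Conditional on adoption and on the vendor's choice $(t,e)$, the deployer's problem reduces to maximizing the terms of \eqref{eq:deployer} that depend on $m$:
\[
    \phi(m)=\kappa q(\theta,e)\deff\bigl(1-\exp(-\alpha t-\beta m)\bigr)-\tfrac{k_m}{2}m^2,\qquad m\ge0,
\]
using Assumption~\ref{ass:mvp}. The terms $V-I$ and $-\kappa q$ do not involve $m$ and are dropped. Write $A\equiv\kappa q(\theta,e)\deff>0$, which is strictly positive by the hypotheses $q>0$ and $\deff>0$. The plan is to show that $\phi$ is strictly concave with a positive marginal value at $m=0$. This yields a unique maximizer, which is interior and characterized by the first-order condition. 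Comparative statics then follow from the implicit function theorem applied to that condition.

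\textbf{Existence and uniqueness.} First compute
\[
    \phi'(m)=A\beta e^{-\alpha t-\beta m}-k_m m,\qquad \phi''(m)=-A\beta^2e^{-\alpha t-\beta m}-k_m<0,
\]
so $\phi$ is strictly concave on $[0,\infty)$. At $m=0$ we have $\phi'(0)=A\beta e^{-\alpha t}>0$. As $m\to\infty$, $\phi'(m)\to-\infty$. Since $\phi'$ is continuous and strictly decreasing, it has exactly one zero $m^\ast>0$, and this zero is the unique global maximizer. Setting $\phi'(m^\ast)=0$ gives exactly \eqref{eq:foc_m}. The boundary $m=0$ is never optimal because $\phi'(0)>0$.

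\textbf{Comparative statics.} Define $G(m,x)=A(x)\beta e^{-\alpha t-\beta m}-k_m m$. From the concavity step, $\partial G/\partial m=\phi''<0$. By the implicit function theorem, $\mathrm{sign}(\partial m^\ast/\partial x)=\mathrm{sign}(\partial G/\partial x)$, and $\partial G/\partial x=\beta e^{-\alpha t-\beta m^\ast}\,\partial A/\partial x$. Since $A=\kappa q\,\delta/(1+s)$, we get $\partial A/\partial s=-\kappa q\delta/(1+s)^2<0$ and $\partial A/\partial\kappa=q\deff>0$. Hence $m^\ast$ strictly decreases in $s$ and strictly increases in $\kappa$.

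A cleaner alternative, which needs no differentiability, is to note that the left side of \eqref{eq:foc_m} rises in $m$ while the right side falls in $m$ and scales with $A$. A larger $A$ therefore shifts the crossing point to the right.

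\textbf{Main obstacle.} The only delicate point is confirming that $m^\ast$ is interior rather than a corner. This rests on the standing hypotheses $q>0$ and $\deff>0$, which guarantee $\phi'(0)>0$. When $q=0$ (that is, $e\ge\theta$), $A=0$ and the optimum is the corner $m^\ast=0$. The proof should flag this explicitly, since it is the corner the paper later uses for the full-mitigation case.
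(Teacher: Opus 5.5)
Your proof is correct and follows essentially the paper's route: strict concavity of the deployer's objective in $m$ gives uniqueness, and implicit differentiation of \eqref{eq:foc_m} gives the signs in $s$ and $\kappa$. Your version also spells out the existence and interiority step ($\phi'(0)>0$, $\phi'(m)\to-\infty$), which the paper only states informally in its sketch; the paper, in turn, uses the first-order condition to simplify $G_m=k_m(1+\beta m^\ast)$ and so obtains closed-form derivatives such as $m_s^\ast=-m^\ast/[(1+s)(1+\beta m^\ast)]$, where you give only their signs.
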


\begin{proof}[Proof sketch]
    Marginal monitoring cost is increasing and marginal recourse is decreasing, giving a unique crossing. Implicit differentiation of~\eqref{eq:foc_m} yields the signs. Appendix~\ref{app:equilibrium} gives the derivatives.
\end{proof}

Lock-in makes information less useful: when the deployer's exit or compensation threat is not credible, evidence has less strategic value and monitoring falls.

\subsection{Vendor Choices}

Define the vendor's marginal enforcement exposure after anticipating the monitoring response:
\begin{equation}
\label{eq:H}
    \begin{aligned}
        H(t,m)&\equiv p(t,m)+(1-p(t,m))\frac{\beta m}{1+\beta m}\\
        &=1-\frac{\exp(-\alpha t-\beta m)}{1+\beta m}.
    \end{aligned}
\end{equation}

\begin{lemma}[Vendor best response]
\label{lem:vendor}
    Suppose adoption benefit~$B$ is large enough that the vendor meets the procurement floor.
    \begin{enumerate}[label=(\alph*),itemsep=2pt]
        \item The vendor chooses $t^\ast=\bar t$.
        \item In the interior region $e^\ast<\theta$, mitigation satisfies
        \begin{equation}
        \label{eq:foc_e}
            e^\ast=\frac{\lambda F}{k_e}\,[H(\bar t,m^\ast)+\Delta].
        \end{equation}
        If the right-hand side reaches~$\theta$, mitigation clamps at $e^\ast=\theta$ and harm is zero.
    \end{enumerate}
\end{lemma}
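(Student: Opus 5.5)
The plan is to substitute the deployer's best response from Lemma~\ref{lem:deployer} into the vendor's payoff \eqref{eq:vendor} and treat $m^\ast$ as a function $m^\ast(t,e)$. Conditional on adoption, the vendor's reduced objective is
\[
W(t,e)=B-\tfrac{k_t}{2}t^2-\tfrac{k_e}{2}e^2-\lambda F\,q(\theta,e)\bigl[p(t,m^\ast(t,e))+\Delta\bigr].
\]
The envelope theorem does not apply, because $m^\ast$ is chosen by the deployer, not the vendor. The key computation is therefore the total derivatives of $m^\ast$, obtained by implicit differentiation of \eqref{eq:foc_m}. Taking logs of \eqref{eq:foc_m} gives
\[
\ln m^\ast=\ln q+\ln(\kappa\deff\beta/k_m)-\alpha t-\beta m^\ast .
\]
This yields $\partial m^\ast/\partial q = m^\ast/\bigl(q(1+\beta m^\ast)\bigr)$ and $\partial m^\ast/\partial t=-\alpha m^\ast/(1+\beta m^\ast)$. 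Since $\partial q/\partial e=-1$ in the interior region, these are the only ingredients needed.

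For part (b), I would differentiate $q\,p(t,m^\ast)$ in $e$:
\[
\frac{d}{de}\bigl[q\,p\bigr]=-p+q\,\beta(1-p)\,\frac{\partial m^\ast}{\partial q}\,(-1)=-p-(1-p)\frac{\beta m^\ast}{1+\beta m^\ast}=-H(t,m^\ast).
\]
The term $\Delta q$ contributes $-\Delta$. The first-order condition is then $k_e e=\lambda F\,[H(t,m^\ast)+\Delta]$, which is \eqref{eq:foc_e} once $t=\bar t$.

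To see that this condition characterizes the optimum, I would show a single crossing in $e$. The left side $k_e e$ is strictly increasing. On the right side, $H$ is increasing in $m$ (directly from the closed form in \eqref{eq:H}), and $m^\ast$ is increasing in $q$, hence decreasing in $e$. So the right side is weakly decreasing in $e$, and there is at most one crossing on $[0,\theta)$. At $e=0$ the right side is positive, so the marginal payoff is positive there.

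If no crossing occurs before $\theta$, the marginal payoff stays positive on all of $[0,\theta)$, and the vendor pushes $e$ up to $\theta$. Beyond $\theta$ we have $q=0$, so further mitigation brings only cost, and $e^\ast=\theta$ with zero harm. The clamp condition is evaluated at the limit $q\to0^+$, where $m^\ast\to0$ and $H\to p(\bar t,0)$.

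For part (a), I would compute the total effect of $t$ on exposure:
\[
\frac{d p}{dt}=(1-p)\Bigl(\alpha+\beta\frac{\partial m^\ast}{\partial t}\Bigr)=(1-p)\frac{\alpha}{1+\beta m^\ast}>0 .
\]
So for $t>\bar t$ both the auditability cost and the expected penalty strictly increase in $t$, while the adoption benefit is unchanged once $t\ge\bar t$. In the corner with $q=0$, only the cost term moves, and it is still increasing. Hence $t^\ast=\bar t$ whenever meeting the floor is worthwhile, which is the participation condition ``$B$ large enough'' in the statement.

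The step I expect to be the main obstacle is justifying the reduced-form optimization at the boundary. Lemma~\ref{lem:deployer} is stated only for $q>0$, so $m^\ast(e)$ must be extended continuously to $e=\theta$ by setting $m^\ast=0$. I also need that the joint choice of $(t,e)$ separates cleanly, i.e.\ that setting $t=\bar t$ first and then optimizing $e$ loses nothing. I would handle this by noting that $\partial W/\partial t<0$ for every $e$, so $t=\bar t$ is optimal pointwise in $e$. Given $t=\bar t$, the single-crossing argument above turns the first-order condition into a global optimum, not merely a stationary point.
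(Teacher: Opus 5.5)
Your proposal is correct and follows essentially the paper's route. You substitute $m^\ast(t,e)$ into the vendor payoff, use the implicit derivatives $m_t^\ast=-\alpha m^\ast/(1+\beta m^\ast)$ and $m_e^\ast=-m^\ast/[(\theta-e)(1+\beta m^\ast)]$ to get $dp/dt>0$ and the $H$-based first-order condition, and your single-crossing argument is the paper's second-order condition $-k_e+\lambda F H_m m_e^\ast<0$ in different words; your explicit handling of the kink at $e=\theta$ and of optimality in $t$ pointwise in $e$ fills in details the paper leaves implicit.
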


\begin{proof}[Proof sketch]
    Higher $t$ beyond the procurement floor raises both cost and total detection, even after the induced fall in monitoring. For mitigation, implicit differentiation of~\eqref{eq:foc_m} gives $\partial m^\ast/\partial e=-m^\ast/[(\theta-e)(1+\beta m^\ast)]<0$. Substitution adds $(1-p)\beta m^\ast/(1+\beta m^\ast)$ to the direct detection term in~$H$, yielding~\eqref{eq:foc_e}: the monitoring response is an \emph{additional private return} to mitigation. The second derivative is negative; Appendix~\ref{app:equilibrium} gives the full derivation.
\end{proof}

Therefore, the vendor satisfies the visible requirement and stops increasing auditability, while choosing mitigation according to the expected enforcement exposure created by both auditability and deployer monitoring.

\subsection{Proxy Compliance and Accountability}

\begin{definition}[Proxy compliance]
\label{def:proxy}
    An equilibrium exhibits \emph{proxy compliance} when $t^\ast=\bar t$ but $e^\ast<e^{FB}$, where the social first-best mitigation is
    \[
        e^{FB}=\min\{1/k_e,\theta\}.
    \]
\end{definition}

The equilibrium is characterized by
\begin{align}
    t^\ast&=\bar t, \label{eq:t_star}\\
    e^\ast&=\min\!\left\{\theta,\frac{\lambda F}{k_e}[H(\bar t,m^\ast)+\Delta]\right\}, \label{eq:e_star}\\
    k_m m^\ast&=\kappa(\theta-e^\ast)\deff\beta
    \exp(-\alpha\bar t-\beta m^\ast), \label{eq:m_star}
\end{align}
with $m^\ast=0$ at the full-mitigation corner. In the interior, the right-hand side of~\eqref{eq:e_star} is strictly decreasing in~$e$ through the monitoring response, so the equilibrium is unique.

\begin{proposition}[Existence of proxy compliance]
\label{prop:proxy}
    Suppose $\Delta=0$ and baseline enforcement is weak enough that
    \[
        \frac{\lambda F}{k_e}\bigl(1-\exp(-\alpha\bar t)\bigr)<e^{FB}.
    \]
    Then there exists $\bar s$ such that sufficiently high switching cost yields proxy compliance. As $s\to\infty$, $m^\ast\to0$ and
    \[
        e^\ast\to\frac{\lambda F}{k_e}\bigl(1-\exp(-\alpha\bar t)\bigr).
    \]
\end{proposition}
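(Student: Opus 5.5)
The plan is to work with the fixed-point system \eqref{eq:e_star}--\eqref{eq:m_star} with $\Delta=0$, regard $s$ as the only moving parameter, and establish three things in turn: (i) $m^\ast\to0$ as $s\to\infty$, uniformly in $e$; (ii) the limit of $e^\ast$ is the stated one; (iii) a threshold $\bar s$ exists beyond which $e^\ast<e^{FB}$. Write $L\equiv\frac{\lambda F}{k_e}(1-\exp(-\alpha\bar t))=\frac{\lambda F}{k_e}H(\bar t,0)$ for the limiting value, and note the hypothesis $L<e^{FB}\le\theta$.

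\emph{Step (i).} By Lemma~\ref{lem:deployer}, for each $e<\theta$ the monitoring level $m^\ast(e;s)$ is the unique solution of \eqref{eq:m_star}. Since $\theta-e\le\theta$ and $\exp(\cdot)\le1$, this gives
\[
0\le m^\ast\le \frac{\kappa\theta\beta\delta}{k_m(1+s)},
\]
which tends to zero uniformly in $e\in[0,\theta]$. At the corner $e=\theta$ we have $m^\ast=0$ directly.

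\emph{Step (ii).} $H(\bar t,m)$ is continuous and increasing in $m$, because $\exp(-\beta m)/(1+\beta m)$ is decreasing. Hence
\[
H(\bar t,0)=1-\exp(-\alpha\bar t)\le H(\bar t,m^\ast)\le H\!\left(\bar t,\frac{\kappa\theta\beta\delta}{k_m(1+s)}\right).
\]
Define $G_s(e)=\frac{\lambda F}{k_e}H(\bar t,m^\ast(e;s))$. The text after \eqref{eq:m_star} states that the equilibrium is unique, so $e^\ast(s)$ is the unique fixed point of $e\mapsto\min\{\theta,G_s(e)\}$. The bounds above squeeze $G_s$ uniformly into $[L,\,L+\varepsilon_s]$ with $\varepsilon_s\to0$. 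Because $L<\theta$, for $s$ large enough $G_s<\theta$, so the clamp is inactive and $e^\ast(s)=G_s(e^\ast(s))\in[L,L+\varepsilon_s]$. It follows that $e^\ast\to L$, and $m^\ast(e^\ast(s);s)\to0$ by Step (i).

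\emph{Step (iii).} Choose $\bar s$ such that $L+\varepsilon_s<e^{FB}$ for all $s>\bar s$; this is possible because $L<e^{FB}$ is a strict inequality and $\varepsilon_s\downarrow0$ monotonically in $s$ (the bound on $m^\ast$ is decreasing in $s$). Then $e^\ast<e^{FB}$. Together with $t^\ast=\bar t$ from Lemma~\ref{lem:vendor}(a), this is proxy compliance in the sense of Definition~\ref{def:proxy}.

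The main obstacle is that Lemma~\ref{lem:vendor}(b) is derived for interior $e^\ast$ under the maintained assumption that $B$ is large enough for the floor to bind. I would therefore check explicitly that the interior region is the relevant one for large $s$, which follows from $G_s<\theta$, and cite the participation condition rather than re-derive it. The other subtle point is uniformity: the bound on $m^\ast$ must not depend on the endogenous $e^\ast$, which the crude bound $\theta-e\le\theta$ secures.
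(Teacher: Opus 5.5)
Your proof is correct. It follows the same skeleton as the paper's ($\deff\to0$ forces $m^\ast\to0$, so $H(\bar t,m^\ast)\to1-\exp(-\alpha\bar t)$ and $e^\ast\to L<e^{FB}$), but you close it with a different tool.

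The paper gets the threshold from the interior comparative static \eqref{eq:s-response}, $de^\ast/ds<0$, obtained by implicit differentiation of the joint fixed point, plus continuity. It simply asserts that \eqref{eq:m_star} forces $m^\ast\to0$, without addressing that $e^\ast$ moves with $s$ at the same time.

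You replace both steps with the uniform bound $m^\ast\le\kappa\theta\beta\delta/[k_m(1+s)]$ and the monotonicity of $H$ in $m$. Together these squeeze every fixed point into $[L,L+\varepsilon_s]$. That single estimate does three things at once, with no implicit-function machinery:
\begin{enumerate}
\item it shows the clamp is inactive for large $s$;
\item it makes the limit rigorous;
\item it gives an explicit sufficient threshold: any $s$ with $\frac{\lambda F}{k_e}H\bigl(\bar t,\kappa\theta\beta\delta/[k_m(1+s)]\bigr)<e^{FB}$.
\end{enumerate}

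What the paper's route buys in exchange is global shape. Because $e^\ast$ is monotone in $s$, the set of switching costs producing proxy compliance is a half-line, and $\bar s$ can be taken to be the exact crossing point. Your $\bar s$ is only sufficient and generally conservative.

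Two small economies are available in your write-up. First, you do not need the uniqueness claim. Once $L+\varepsilon_s<\theta$, the corner is not a fixed point, since $m^\ast=0$ there and $\min\{\theta,L\}=L<\theta$. So every equilibrium is interior and lies in $[L,L+\varepsilon_s]$; only existence matters. Second, choosing $\bar s$ needs only $\varepsilon_s\to0$, not that $\varepsilon_s$ is monotone.
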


\begin{proposition}[Accountability under stronger institutions]
\label{prop:account}
    Within the interior equilibrium: (a) $de^\ast/d\Delta>0$, although monitoring feedback attenuates the direct effect; (b) $de^\ast/ds<0$, so reducing switching costs raises mitigation; and (c) sufficiently strong independent detection reaches $e^{FB}$, with portability reinforcing the transition.
\end{proposition}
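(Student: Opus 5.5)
The plan is to treat the interior equilibrium as the solution of a two-equation system in $(e,m)$. Write
\[
G(e,m;\Delta,s)\equiv e-\frac{\lambda F}{k_e}\bigl[H(\bar t,m)+\Delta\bigr]=0,
\qquad
K(e,m;s)\equiv k_m m-\kappa(\theta-e)\deff\beta\exp(-\alpha\bar t-\beta m)=0,
\]
and apply the implicit function theorem to \eqref{eq:e_star}--\eqref{eq:m_star}. It is cleaner to substitute the monitoring best response from Lemma~\ref{lem:deployer}, $m^\ast=m^\ast(e,s)$, into the vendor condition. This reduces the system to a single equation $e=\Phi(e;\Delta,s)\equiv\frac{\lambda F}{k_e}[H(\bar t,m^\ast(e,s))+\Delta]$. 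Differentiating gives
\[
\frac{de^\ast}{dx}=\frac{\partial\Phi/\partial x}{1-\partial\Phi/\partial e},
\]
and everything hinges on signing the denominator and the partials.

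For the denominator, I use $\partial H/\partial m=\beta e^{-\alpha\bar t-\beta m}\bigl[1+\frac{1}{1+\beta m}\bigr]/(1+\beta m)>0$, together with $\partial m^\ast/\partial e=-m^\ast/[(\theta-e)(1+\beta m^\ast)]<0$ from Lemma~\ref{lem:vendor}. These give $\partial\Phi/\partial e<0$, which is the uniqueness property already asserted after \eqref{eq:m_star}. Hence $1-\partial\Phi/\partial e>1$.

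For part (a), $\partial\Phi/\partial\Delta=\lambda F/k_e>0$, so $de^\ast/d\Delta=(\lambda F/k_e)/(1-\partial\Phi/\partial e)\in(0,\lambda F/k_e)$. The strict inequality $<\lambda F/k_e$ is exactly the attenuation claim: higher $e$ lowers $q$, so the deployer monitors less and $H$ falls.

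For part (b), $\partial\Phi/\partial s=\frac{\lambda F}{k_e}\,\partial_m H\cdot\partial m^\ast/\partial s$. Lemma~\ref{lem:deployer} gives $\partial m^\ast/\partial s<0$, since $\deff=\delta/(1+s)$ is decreasing in $s$ and the right-hand side of \eqref{eq:foc_m} scales with $\deff$. Combined with the positive denominator, this yields $de^\ast/ds<0$. I would record explicitly that $\partial m^\ast/\partial s=-\frac{m^\ast}{(1+s)(1+\beta m^\ast)}$, obtained by implicit differentiation of \eqref{eq:foc_m}.

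For part (c), I note that $H\ge 1-e^{-\alpha\bar t}\ge0$ uniformly in $m$. Then $\Phi(e;\Delta,s)\ge\frac{\lambda F}{k_e}\Delta$, and once $\Delta\ge e^{FB}k_e/(\lambda F)$ the fixed point satisfies $e^\ast\ge e^{FB}$. More precisely, $e^\ast$ reaches $e^{FB}$ at some finite threshold $\Delta^\dagger(s)$, by continuity and the monotonicity from (a). If $e^{FB}=\theta$, the clamp in \eqref{eq:e_star} delivers the full-mitigation corner. If $e^{FB}=1/k_e<\theta$, the claim is that mitigation attains $e^{FB}$, and the threshold is where $\Phi(e^{FB})=e^{FB}$.

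The portability reinforcement is then a statement about the threshold. Because $\Phi$ is decreasing in $s$, the condition $\Phi(e^{FB};\Delta,s)=e^{FB}$ defines $\Delta^\dagger(s)$ increasing in $s$, with $d\Delta^\dagger/ds=-(\partial\Phi/\partial s)/(\lambda F/k_e)>0$. So lowering $s$ lowers the audit capacity needed.

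The main obstacle is making the implicit-function argument legitimate near the boundary. Specifically, I need $m^\ast$ to be differentiable in $(e,s)$ on the interior region $e<\theta$, and I need to handle the kink at the clamp $e^\ast=\theta$, where $m^\ast=0$. I would restrict (a)--(b) to the open interior, where $q>0$ and Lemma~\ref{lem:deployer} gives a smooth unique $m^\ast$. For (c) I would use only continuity and monotonicity of the fixed point, not derivatives, so the corner causes no difficulty.
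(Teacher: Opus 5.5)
Your proposal is correct and follows essentially the paper's own argument. Like the paper, you reduce to the fixed point $e=\Phi(e;\Delta,s)$ and use $H_m>0$ and $m_e^\ast<0$ to get $D=1-\Phi_e>1$; you read (a) and (b) off the implicit derivatives $A/D$ and $AH_m m_s^\ast/D$ with $A=\lambda F/k_e$, and obtain (c) from the threshold $\max\{0,k_e e^{FB}/\lambda F-H(\bar t,m^{FB}(s))\}$ that the paper packages as Theorem~\ref{thm:audit}. The only refinement is that portability lowers this threshold weakly rather than strictly: when $e^{FB}=\theta$ (as in the worked example) $m^{FB}=0$ for every $s$, and when the $\max$ binds the threshold is zero, so $d\Delta^\dagger/ds=0$ in those cases.
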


\subsection{Worked Example}
\label{sec:worked}

Let $\theta=0.6$, $k_t=k_e=k_m=1$, $\alpha=\beta=\kappa=\delta=\lambda F=1$, and $\bar t=0.2$. Then $e^{FB}=0.6$. Solving~\eqref{eq:e_star}--\eqref{eq:m_star} yields Table~\ref{tab:numerical}.

\begin{table}[t]
    \centering
    \caption{Equilibrium values in three institutional settings.}
    \label{tab:numerical}
    \small
    \begin{tabular}{@{}lccccc@{}}
        \toprule
        Setting & $s$ & $\Delta$ & $m^\ast$ & $e^\ast$ & $\E[h]$ \\
        \midrule
        1. High lock-in, no audit & 9 & 0 & 0.03 & 0.23 & 0.37 \\
        2. Portability only & 1 & 0 & 0.10 & 0.33 & 0.27 \\
        3. Combined reform & 1 & 1 & 0.00 & 0.60 & 0.00 \\
    \bottomrule
    \end{tabular}
\end{table}

Portability alone produces a meaningful but incomplete improvement: Setting~2 moves above half of the first best but does not reach accountability. Under combined reform, the desired interior mitigation exceeds~$\theta$; harm is eliminated and the deployer no longer monitors because there is no residual harm to investigate.

\begin{figure}[t]
    \centering
    \includegraphics[width=\columnwidth]{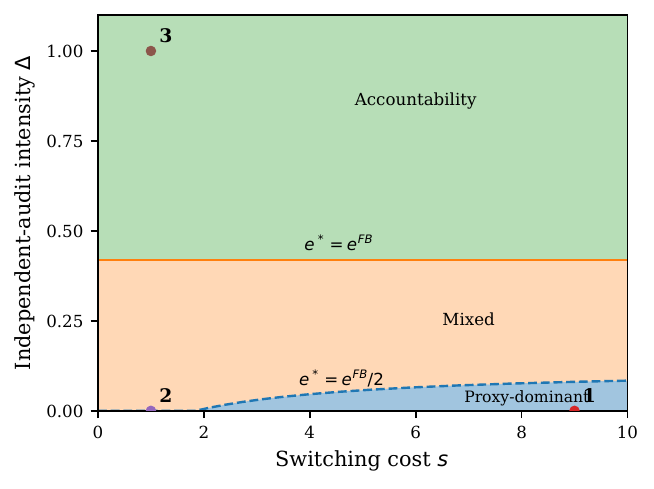}
    \caption{Illustrative regimes in the $(s,\Delta)$ plane under the worked-example parameters. Proxy compliance formally includes every point below first best; for visualization, the lower region is labeled     \emph{proxy-dominant} when $e^\ast<e^{FB}/2$, the middle region is mixed, and the upper region reaches $e^{FB}$. Markers correspond to Table~\ref{tab:numerical}.}
    \label{fig:regimes}
\end{figure}

\subsection{Comparative Statics}

The total equilibrium effects differ from partial effects because policy-induced mitigation changes monitoring. Higher $s$ lowers both monitoring and mitigation; higher $\Delta$, $\lambda F$, or $\bar t$ raises mitigation and reduces residual monitoring; higher mitigation cost~$k_e$ has the opposite effect; and higher deployer harm exposure $\kappa$ raises monitoring and thereby mitigation. Appendix~\ref{app:comparative} provides derivatives and a summary table.

\section{Policy Levers and Welfare}
\label{sec:policy}

\subsection{Audit Rights and Independent Evaluation}

Independent audit capacity raises enforcement exposure by~$\Delta$ without relying on deployer monitoring or ordinary vendor-controlled access.

\begin{theorem}[Audit rights eliminate proxy compliance]
\label{thm:audit}
    For every fixed lock-in level~$s$ and $\lambda F>0$, a finite audit threshold~$\bar\Delta(s)$ exists such that $\Delta\ge\bar\Delta(s)$ yields $e^\ast\ge e^{FB}$. A uniform sufficient condition is
    \[
        \Delta\ge \frac{k_e e^{FB}}{\lambda F}.
    \]
    Thus sufficiently strong independent detection eliminates proxy compliance at any lock-in level.
\end{theorem}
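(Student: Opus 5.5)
The plan is to work directly from the equilibrium characterization \eqref{eq:t_star}--\eqref{eq:m_star}, which Lemma~\ref{lem:vendor} supplies. The only fact needed about the enforcement-exposure term is that $H(\bar t,m^\ast)\ge0$ for every admissible monitoring level. This is immediate from \eqref{eq:H}: since $\exp(-\alpha\bar t-\beta m)\le1$ and $1+\beta m\ge1$ for $m\ge0$, we have $H\in[0,1)$. Because this lower bound holds whatever value $m^\ast$ takes, it is uniform in the switching cost $s$ (equivalently in $\deff$). That uniformity is what makes the threshold independent of lock-in.

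Given this bound, the interior candidate mitigation satisfies
\[
\frac{\lambda F}{k_e}\bigl[H(\bar t,m^\ast)+\Delta\bigr]\ \ge\ \frac{\lambda F}{k_e}\,\Delta .
\]
If $\Delta\ge k_e e^{FB}/\lambda F$, the right-hand side is at least $e^{FB}$. I would then split into two cases according to \eqref{eq:e_star}.

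\textbf{Case $e^{FB}=\theta$.} The unclamped candidate is at least $\theta$, so the minimum in \eqref{eq:e_star} returns $e^\ast=\theta=e^{FB}$. This is the full-mitigation corner, with $m^\ast=0$.

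\textbf{Case $e^{FB}=1/k_e<\theta$.} Either $e^\ast=\theta>e^{FB}$, or $e^\ast$ is interior and equals the candidate, which is at least $e^{FB}$. In both subcases $e^\ast\ge e^{FB}$.

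Define $\bar\Delta(s)$ as the infimum of $\Delta\ge0$ at which $e^\ast\ge e^{FB}$. The uniform condition shows this set is nonempty, so $\bar\Delta(s)\le k_e e^{FB}/\lambda F<\infty$ for every $s$, using $\lambda F>0$. The upward-closed property of the set, that is, that every $\Delta\ge\bar\Delta(s)$ also works, I would take from Proposition~\ref{prop:account}(a), $de^\ast/d\Delta>0$, together with the observation that once the clamp binds it remains binding as $\Delta$ increases.

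The main obstacle is self-consistency. The bound uses $m^\ast$, but $m^\ast$ is itself determined by $e^\ast$ through \eqref{eq:m_star}, so I must check that the fixed point behaves as claimed rather than merely bounding a candidate. I would handle this with the map
\[
\Phi(e)=\min\Bigl\{\theta,\tfrac{\lambda F}{k_e}[H(\bar t,m^\ast(e))+\Delta]\Bigr\}
\]
on $[0,\theta]$. By Lemma~\ref{lem:deployer}, $m^\ast(e)$ is well defined for $e<\theta$ and equals $0$ at $e=\theta$. The bound $\Phi(e)\ge\min\{\theta,\lambda F\Delta/k_e\}\ge e^{FB}$ holds pointwise in $e$, so every fixed point, and in particular the unique equilibrium asserted after \eqref{eq:m_star}, lies in $[e^{FB},\theta]$. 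A secondary point is that vendor participation in Appendix~\ref{app:equilibrium} may tighten as $\Delta$ grows, because expected penalties fall only as harm falls. I would note that the theorem conditions on adoption, as in Assumption~\ref{ass:adoption} and Lemma~\ref{lem:vendor}, so the result holds under the maintained hypothesis that $B$ is large enough.
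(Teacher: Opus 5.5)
Your proof is correct. Its core is the same as the paper's: the uniform condition follows from $H(\bar t,m)\ge0$ for all $m\ge0$, so lock-in drops out.

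The two routes differ in how they treat the exact threshold. The paper evaluates the vendor's condition at the target itself. Using the deployer's response $m^{FB}(s)$ to $e=e^{FB}$, it gets the closed form \eqref{eq:exact-threshold}, and only then discards $H$. Its claim that this is the \emph{smallest} boost rests on strict concavity of the vendor's objective in $e$ (single crossing of $e-\Phi(e)$), which the proof leaves implicit. You instead bound your clamped map pointwise, $\Phi(e)\ge\min\{\theta,\lambda F\Delta/k_e\}$ on $[0,\theta]$. This settles the self-consistency concern with no appeal to uniqueness, but leaves $\bar\Delta(s)$ as an abstract infimum.

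That costs you what the formula delivers:
\begin{itemize}
\item $\bar\Delta(s)$ weakly falls as $s$ falls, because $m^{FB}(s)$ rises. This is the fact invoked in Proposition~\ref{prop:account}(c).
\item $\bar\Delta(s)$ is independent of $s$ when $e^{FB}=\theta$.
\item It equals $0.418731\ldots$ in the worked example.
\end{itemize}

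It also leaves a small loose end. Monotonicity from Proposition~\ref{prop:account}(a) covers every $\Delta>\bar\Delta(s)$, but $\Delta=\bar\Delta(s)$ itself needs the set to be closed. Applying single crossing at $e^{FB}$ fixes both issues at once: $e^\ast\ge e^{FB}$ if and only if $\lambda F[H(\bar t,m^{FB}(s))+\Delta]\ge k_e e^{FB}$. This condition is affine in $\Delta$, reproduces the paper's threshold, and makes the appeal to Proposition~\ref{prop:account}(a) unnecessary. For the bare existence claim alone, $\bar\Delta(s)=k_e e^{FB}/\lambda F$ already suffices.
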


The exact threshold is lower because baseline detection and the monitoring-response term also contribute to mitigation. Audit rights are therefore the strongest single lever in the model, but their practical effect depends on genuinely independent access and credible enforcement rather than the formal existence of an audit process.

\subsection{Portability and Standards}

\begin{proposition}[Portability raises mitigation]
\label{prop:portability}
    Reducing switching cost~$s$ raises equilibrium monitoring and mitigation and lowers expected harm. The effect is stronger when deployer harm exposure~$\kappa$ and monitoring productivity~$\beta$ are high.
\end{proposition}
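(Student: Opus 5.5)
Since $s$ enters only through $\deff=\delta/(1+s)$ and $d\deff/ds<0$, the plan is to establish each claim as a statement about $\deff$ (or about $\deff$ interacted with $\kappa$ and $\beta$) inside the interior system \eqref{eq:e_star}--\eqref{eq:m_star}, and then apply the chain rule. Throughout I work in the interior region $e^\ast<\theta$. At the full-mitigation corner $e^\ast=\theta$, harm is zero and $m^\ast=0$, so reducing $s$ leaves mitigation and harm unchanged and the claim holds weakly. I will note that qualification explicitly.

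\textbf{Step 1: the equilibrium map.} Write $x\equiv\theta-e$. Define $m(x,\deff)$ implicitly by $k_m m=\kappa x\deff\beta\exp(-\alpha\bar t-\beta m)$, which Lemma~\ref{lem:deployer} makes well defined. Implicit differentiation gives
\[
\frac{\partial m}{\partial\deff}=\frac{m}{\deff(1+\beta m)}>0,
\qquad
\frac{\partial m}{\partial e}=-\frac{m}{x(1+\beta m)}<0 .
\]
Next, $H(\bar t,m)=1-\exp(-\alpha\bar t-\beta m)/(1+\beta m)$ is strictly increasing in $m$, since the numerator falls and the denominator rises. Equilibrium mitigation is therefore the fixed point of $G(e,\deff)\equiv\frac{\lambda F}{k_e}[H(\bar t,m(\theta-e,\deff))+\Delta]$. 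Here $G_e<0$, which is the uniqueness property already stated after \eqref{eq:m_star}, and $G_{\deff}>0$.

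\textbf{Step 2: total effects.} From $e^\ast=G(e^\ast,\deff)$ we get $de^\ast/d\deff=G_{\deff}/(1-G_e)>0$. Because $\deff$ is decreasing in $s$, it follows that $de^\ast/ds<0$. Expected harm is $\theta-e^\ast$, so it rises with $s$; equivalently, portability lowers harm.

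Monitoring is the delicate part. The total derivative is
\[
\frac{dm^\ast}{d\deff}=\frac{\partial m}{\partial\deff}+\frac{\partial m}{\partial e}\,\frac{de^\ast}{d\deff},
\]
and the second term is negative because higher mitigation crowds out monitoring. I will substitute the closed forms. Writing $\phi\equiv m/(1+\beta m)$ and using $G_{\deff}=\frac{\lambda F}{k_e}H_m\,\phi/\deff$ and $-G_e=\frac{\lambda F}{k_e}H_m\,\phi/x$, the expression reduces to
\[
\frac{dm^\ast}{d\deff}=\frac{\phi}{\deff}\cdot\frac{1}{1-G_e}>0 .
\]
This shows that the feedback attenuates the direct effect but never reverses it. I expect this cancellation to be the main obstacle, because its sign hinges on the two partials sharing the factor $\phi$. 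If the algebra turns out to be messier than sketched, my fallback is a contradiction argument. Suppose $\deff$ rises but $m^\ast$ weakly falls. Then $H$ weakly falls, so by \eqref{eq:e_star} $e^\ast$ weakly falls, so $x$ weakly rises. But then \eqref{eq:m_star}, with a larger $\deff$ and a larger $x$, forces $m^\ast$ to rise strictly, a contradiction.

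\textbf{Step 3: heterogeneity in $\kappa$ and $\beta$.} For the stronger-effect claim, I will read "stronger" as $|de^\ast/ds|$ increasing in $\kappa$ and $\beta$, evaluated at the interior equilibrium. The route is to note that $\kappa$ and $\deff$ enter \eqref{eq:m_star} only through the product $\kappa\deff$. So the sensitivity of $m^\ast$ to $s$ scales with the level of $m^\ast$, and $m^\ast$ is increasing in $\kappa$ by Lemma~\ref{lem:deployer}.

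Concretely, I would sign $\partial_\kappa(\deff\,\partial m^\ast/\partial\deff)$ through $\phi=m/(1+\beta m)$, which is increasing in $m$, and check that the attenuation factor $1/(1-G_e)$ does not overturn this. For $\beta$, I would do the same using $H_m\phi$. Signing these cross-partials is where I expect genuine difficulty, since $1-G_e$ also moves with $m$. If a global sign cannot be secured, I would prove the claim in the small-monitoring regime ($\beta m^\ast$ small), where the derivative is approximately proportional to $\kappa\beta^2\deff$. I would then state the general version as a comparative-static observation confirmed in the worked example.
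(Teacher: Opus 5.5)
Your Steps 1--2 are correct. For mitigation they follow the paper's argument: your $1-G_e$ is the paper's $D=1-AH_m m_e^\ast$, so $de^\ast/d\deff=G_{\deff}/(1-G_e)$ is \eqref{eq:s-response} after the chain rule. For monitoring, the paper avoids the total derivative. It substitutes $e=A[H(\bar t,m)+\Delta]$ into \eqref{eq:m_star} and signs $\Psi_m>0$, $\Psi_s>0$. Your cancellation is also right and gives the magnitude, $dm^\ast/ds=-m^\ast/[(1+s)(1+\beta m^\ast)D]$, which equals $-\Psi_s/\Psi_m$. Your contradiction fallback is the $\Psi$ argument in monotone form. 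You need not worry about $s$ moving the equilibrium across the corner. Since $m\to0$ as $e\to\theta$, the corner condition is $A[1-e^{-\alpha\bar t}+\Delta]\ge\theta$, which involves none of $s$, $\kappa$, $\beta$. Note that the paper's proof stops at these signs and gives no argument for the $\kappa,\beta$ sentence.

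The gap is Step 3, and more algebra will not close it. Under your reading ($|de^\ast/ds|$ increasing in $\kappa$ and $\beta$), the claim is false as a global statement. So signing the cross-partials must fail, and the general version cannot be ``confirmed'' numerically.

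Put $u=\beta m$. Then \eqref{eq:m_star} reads $ue^u=c\,(\theta-e)$ with $c\equiv\kappa\beta^2\deff e^{-\alpha\bar t}/k_m$. Also $H=1-e^{-\alpha\bar t}e^{-u}/(1+u)$ depends on $u$ alone, so $e^\ast$ depends on $(\kappa,\beta,s)$ only through $c$. The equilibrium is interior for every $c$, by the remark above. Since $dc/ds=-c/(1+s)$, we get $|de^\ast/ds|=\frac{1}{1+s}\,c\,\frac{de^\ast}{dc}$. Suppose this were increasing in $\kappa$ on all of $(0,\infty)$. Then $c\,de^\ast/dc\ge\eta>0$ for $c\ge c_0$, and integrating gives $e^\ast(C)-e^\ast(c_0)\ge\eta\ln(C/c_0)\to\infty$, contradicting $e^\ast<\theta$. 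The same argument applies to $\beta$, since $c\propto\beta^2$.

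Concretely, at the worked-example parameters with $s=1$, $|de^\ast/ds|$ is about $0.041$ at $\kappa=1$, $0.044$ at $\kappa=2$, and $0.030$ at $\kappa=10$. So the effect is hump-shaped, peaking near the worked example itself.

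What is provable is the local version your small-monitoring fallback targets: $|de^\ast/ds|=\frac{1}{1+s}\bigl(2Ae^{-\alpha\bar t}x_0\,c+O(c^2)\bigr)$ with $x_0=\theta-A[1-e^{-\alpha\bar t}+\Delta]>0$. Hence the effect increases in $\kappa$ and $\beta$ only for $c$ below a threshold, and the second sentence should be stated with that restriction. One reading does survive globally for $\kappa$: the fixed-$e$ monitoring response $\phi/(1+s)$. Even that is hump-shaped in $\beta$, since $\phi=u/(\beta(1+u))\to0$ as $\beta\to\infty$.
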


Portability works indirectly: compatible interfaces, exportable artifacts, and credible re-tendering make exit and recourse more credible, raising the value of monitoring and, therefore, the vendor's mitigation incentive.

\subsection{Incident Reporting}

Let $\rho\in[0,1]$ denote the effective coverage of a mandatory reporting channel: the probability that a reportable incident reaches the regulator in actionable form. Let $\mu>0$ be the associated enforcement-exposure scale. The vendor then faces additional expected exposure~$\mu\rho\,q(\theta,e)$.

\begin{proposition}[Incident reporting substitutes for vendor-gated evidence]
\label{prop:incident}
    Under effective incident reporting, the mitigation condition becomes
    \begin{equation}
    \label{eq:e_report}
        e_R^\ast=\min\!\left\{\theta,\frac{\lambda F[H(\bar t,m_R^\ast)+\Delta]+\mu\rho}{k_e}\right\}.
    \end{equation}
    Hence $e_R^\ast>e^\ast$ whenever $\mu\rho>0$ and the baseline is below the mitigation corner.
\end{proposition}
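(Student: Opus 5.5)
The plan is to redo the vendor's problem from Lemma~\ref{lem:vendor} with the extra reporting term added to the objective, so that \eqref{eq:e_report} follows as a modified first-order condition. After that, a monotonicity argument in the style of the uniqueness claim for \eqref{eq:e_star} will give the strict comparison $e_R^\ast>e^\ast$.

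\textbf{Step 1: the new objective.} With reporting, the vendor's payoff is \eqref{eq:vendor} minus $\mu\rho\,q(\theta,e)$. Because $\rho$ and $\mu$ are fixed, this term does not depend on $t$ or $m$. The deployer's problem \eqref{eq:deployer} is unchanged, so Lemma~\ref{lem:deployer} still describes monitoring: $m_R^\ast(e)$ solves \eqref{eq:foc_m}.

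\textbf{Step 2: auditability and the mitigation condition.} The $t$-part of the argument in Lemma~\ref{lem:vendor}(a) is untouched, since the new term is independent of $t$. Hence $t^\ast=\bar t$ still holds.

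For mitigation in the interior region $e<\theta$, I differentiate the vendor's value using the induced monitoring response $\partial m^\ast/\partial e$. This gives the same expression as before, $\lambda F[H(\bar t,m)+\Delta]$, plus the derivative of $\mu\rho(\theta-e)$, which is $-\mu\rho$. So the first-order condition is
\[
k_e e=\lambda F[H(\bar t,m_R^\ast)+\Delta]+\mu\rho .
\]
The second-order condition inherits concavity, because the added term is linear in $e$. Clamping at $\theta$ exactly as in Lemma~\ref{lem:vendor}(b) yields \eqref{eq:e_report}.

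\textbf{Step 3: the comparison.} Write $\Phi(e)=\lambda F[H(\bar t,m^\ast(e))+\Delta]/k_e$, so that $e^\ast$ is the unique fixed point of $e=\Phi(e)$. The paper already notes that $\Phi$ is strictly decreasing, because $m^\ast$ falls in $e$ and $H$ rises in $m$. The reporting equilibrium $e_R^\ast$ solves $e=\Phi(e)+\mu\rho/k_e$ (or clamps at $\theta$).

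Define $G(e)=e-\Phi(e)$. It is strictly increasing, with $G(e^\ast)=0$. In an interior reporting equilibrium, $G(e_R^\ast)=\mu\rho/k_e>0$, so $e_R^\ast>e^\ast$. If instead $e_R^\ast$ clamps at $\theta$, then $e_R^\ast=\theta>e^\ast$ because the baseline is assumed interior.

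\textbf{Main obstacle.} The delicate step is that the fixed-point comparison needs $\Phi$ defined and decreasing on all of $[e^\ast,\theta)$, and existence of $e_R^\ast$ in the interior. I would handle this as follows:
\begin{itemize}
\item Continuity of $m^\ast(e)$ follows from the implicit function theorem applied to \eqref{eq:foc_m}.
\item $\Phi$ is bounded, because $H\le1$.
\item Then $G(e)-\mu\rho/k_e$ is continuous and strictly increasing. Either it crosses zero before $\theta$, giving an interior $e_R^\ast$, or it stays negative on $[e^\ast,\theta)$, in which case the corner $e_R^\ast=\theta$ applies.
\end{itemize}
Either case gives a unique $e_R^\ast$ with $e_R^\ast>e^\ast$.
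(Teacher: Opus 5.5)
Your proposal is correct, and it reaches the comparison by a somewhat different route from the paper. The paper's proof asserts that reporting shifts the vendor condition to $e=\Phi(e;\Delta,s)+z_R/k_e$ with $z_R=\mu\rho$. It then argues locally: implicit differentiation gives $de_R^\ast/dz_R=(1/k_e)/(1-AH_m m_e^\ast)>0$. The strict inequality at a given $\mu\rho>0$ then follows by raising $z_R$ from $0$ ``until the full-mitigation corner is reached.''

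You do two things differently. First, you derive the shifted first- and second-order conditions from the augmented vendor objective, and you check that $t^\ast=\bar t$ survives; the paper leaves both implicit in this proof. Second, you compare levels directly: $G(e)=e-\Phi(e)$ is strictly increasing with $G(e^\ast)=0$, so an interior $e_R^\ast$ with $G(e_R^\ast)=\mu\rho/k_e>0$ must exceed $e^\ast$, and the corner case is immediate.

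Both arguments rest on the same fact, that $\Phi$ is strictly decreasing in $e$ (equivalently $D>1$). Your global comparison is cleaner for a discrete policy change. It needs no differentiability in $z_R$ and no path argument, and it settles existence, uniqueness, and the corner case in one step. The paper's derivative instead gives a quantitative statement: since $D>1$, $0<de_R^\ast/dz_R<1/k_e$, so the monitoring response attenuates but never reverses the reporting effect. You can recover this attenuation from your own argument. In the interior case, $e_R^\ast-e^\ast=\mu\rho/k_e+\Phi(e_R^\ast)-\Phi(e^\ast)<\mu\rho/k_e$, because $\Phi$ is decreasing and $e_R^\ast>e^\ast$.
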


The product~$\mu\rho$ is deliberately an \emph{effective} channel: a nominal reporting duty with narrow scope, under-reporting, or weak follow-up contributes little.

\subsection{Outcome-Linked Liability}

Let $\lambda_OF_O$ denote the exposure tied directly to realized harm rather than to the separate evidence-generation channel~$p(t,m)$.

\begin{proposition}[Outcome-linked exposure raises mitigation]
\label{prop:liability}
    The mitigation condition becomes
    \begin{equation}
    \label{eq:e_liability}
        e_L^\ast=\min\!\left\{\theta,\frac{\lambda F[H(\bar t,m_L^\ast)+\Delta]+\lambda_OF_O}{k_e}\right\},
    \end{equation}
    so any positive outcome-linked exposure raises mitigation below the corner. Its direct component is unaffected by vendor-controlled auditability.
\end{proposition}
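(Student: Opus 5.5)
The plan is to redo the vendor's problem from Lemma~\ref{lem:vendor} with the extra outcome-linked term. Then show the resulting fixed-point equation for $e$ still has a unique solution, and that this solution moves up when $\lambda_OF_O$ moves up.

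\textbf{Step 1: the modified payoff and first-order condition.} The vendor's objective becomes
\[
U_V=aB-c_t(t)-c_e(e)-\lambda F\,q(\theta,e)[p(t,m)+\Delta]-\lambda_OF_O\,q(\theta,e).
\]
The deployer's problem does not change, so the monitoring response $m^\ast(e)$ from Lemma~\ref{lem:deployer} is the same function as before. Its derivative is $\partial m^\ast/\partial e=-m^\ast/[(\theta-e)(1+\beta m^\ast)]$. The auditability choice is also unaffected: the new term does not depend on $t$, so part (a) of Lemma~\ref{lem:vendor} gives $t^\ast=\bar t$.

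In the region $e<\theta$, differentiate with respect to $e$ as in the proof of Lemma~\ref{lem:vendor}. The evidence-based term gives $\lambda F[H(\bar t,m^\ast)+\Delta]$, including the monitoring-response correction. The outcome term gives the constant marginal benefit $\lambda_OF_O$. Setting the sum equal to $k_e e$ yields the interior part of~\eqref{eq:e_liability}, and clamping at $\theta$ gives the corner. The $\lambda_OF_O$ part of the marginal return involves neither $t$ nor $m$. This is the sense in which its direct component is unaffected by vendor-controlled auditability.

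\textbf{Step 2: concavity and uniqueness.} The second derivative in $e$ is the one from Lemma~\ref{lem:vendor}, because the outcome term is linear in $e$ on $e<\theta$. It is therefore still negative, so the first-order condition identifies the maximum. Uniqueness uses the same argument as after~\eqref{eq:m_star}. Define
\[
G(e)=\bigl(\lambda F[H(\bar t,m^\ast(e))+\Delta]+\lambda_OF_O\bigr)/k_e .
\]
Adding a constant leaves $G$ strictly decreasing in $e$. The map $e\mapsto \min\{\theta,G(e)\}$ therefore has a unique fixed point $e_L^\ast$.

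\textbf{Step 3: comparative static in $\lambda_OF_O$.} Write $\Phi(e,x)=k_e e-\lambda F[H(\bar t,m^\ast(e))+\Delta]-x$ with $x=\lambda_OF_O$, and apply implicit differentiation at an interior solution. We have $\partial\Phi/\partial x=-1$ and $\partial\Phi/\partial e=k_e-\lambda F\,H_m\,\partial m^\ast/\partial e>0$. The sign holds because $H_m>0$ and $\partial m^\ast/\partial e<0$. Hence $de_L^\ast/dx=1/(\partial\Phi/\partial e)>0$.

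The ordering does not need to rely on derivatives. Because $G$ is strictly decreasing, the map $e\mapsto e-G(e)$ is strictly increasing. Compare the baseline $G_0$ (with $x=0$) and the shifted $G_x=G_0+x/k_e$ for $x>0$. Strict monotonicity of $e-G_x(e)$ places its zero strictly to the right of the zero of $e-G_0(e)$. This gives $e_L^\ast>e^\ast$ whenever the baseline is below the corner; once the corner binds, the inequality becomes weak.

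\textbf{Main obstacle.} The delicate point is checking that the monitoring-response correction in $H$ keeps the same form with the new term present. The envelope step must not pick up any $\lambda_OF_O$-dependence through $m^\ast$. It does not, because the deployer's first-order condition~\eqref{eq:m_star} depends only on $q$, not on vendor penalties. A second check is that the sign of $\partial\Phi/\partial e$ holds on the whole interior, including near $e\to\theta$, where $m^\ast\to0$ and $\partial m^\ast/\partial e$ stays bounded by continuity of~\eqref{eq:m_star}.
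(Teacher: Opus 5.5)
Your proposal is correct and follows the paper's route. The outcome-linked term shifts the vendor's first-order condition additively by $\lambda_OF_O/k_e$ and leaves both the deployer's response and the auditability derivative unchanged, hence $t^\ast=\bar t$; implicit differentiation then gives $de_L^\ast/d(\lambda_OF_O)=1/(k_e-\lambda F H_m m_e^\ast)>0$, which is the paper's expression $(1/k_e)/(1-AH_m m_e^\ast)$. Your additional derivative-free comparison, using the strictly increasing map $e\mapsto e-G(e)$, is a small strengthening: it gives the global strict ordering $e_L^\ast>e^\ast$ below the corner directly, rather than by integrating the local derivative.
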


This is a stylized benchmark, not a claim that existing liability laws automatically observe or compensate for every harm.

\subsection{Welfare}

Social welfare equals adoption value minus expected harm and real resource costs. Below first best, an increase in independent audit exposure raises mitigation and lowers both harm and residual monitoring; therefore, its total welfare effect is positive. Portability has an additional trade-off: it raises beneficial mitigation but can also raise costly monitoring. The precise condition is stated in Appendix~\ref{app:welfare}.

\subsection{Mapping to Current EU Regulation}
\label{sec:mapping}

The AI Act combines provider documentation and risk-management duties with differentiated conformity-assessment procedures. Article~43 and Annex~VI permit internal-control routes for relevant classes of high-risk systems, while Articles~53 and~55 impose obligations on general-purpose model providers and systemic-risk models; the AI Office has information and evaluation powers under Articles~91--93 \parencite{EuropeanUnion2024AIAct}. In the model, documentation that remains provider-controlled primarily affects~$t$, whereas regulator-controlled evaluation capacity raises~$\Delta$.

Article~40 of the Digital Services Act (DSA) provides an analogous $\Delta$-raising design for very large online platforms and search engines \parencite{EuropeanUnion2022DSA}. Portability and interoperability duties under the Digital Markets Act (DMA) illustrate instruments that can lower switching costs, although they are not an AI-specific portability regime \parencite{EuropeanUnion2022DMA}. AI Act Article~73 serious-incident reporting can create a $\mu\rho$ channel when reports are complete, actionable, and credibly enforced.

The proposed AI Liability Directive was a fault-based evidentiary instrument centered on disclosure and rebuttable presumptions rather than the outcome-linked benchmark in Proposition~\ref{prop:liability}; it was formally withdrawn in October~2025 \parencite{EuropeanCommission2025Withdrawal}. The revised Product Liability Directive covers software and provides no-fault liability for defective products but still requires defect, damage, and causation \parencite{EuropeanUnion2024PLD}. It is therefore closer to, but not identical with, the model's direct outcome-exposure channel.

\section{Predictions and Measurement Implications}
\label{sec:pred}

Because $t^\ast=\bar t$, the model's direct comparative statics concern monitoring, mitigation, and harm; claims about contract terms or documentation are institutional hypotheses motivated by the mechanism.

\subsection{Predictions}

\begin{enumerate}[leftmargin=*,itemsep=2pt,topsep=2pt]
    \item \textbf{Lock-in weakens monitoring and mitigation.} Greater integration depth and fewer credible alternatives predict lower monitoring, lower effective detection, and lower mitigation.
    \item \textbf{Visible compliance can remain stable while outcomes deteriorate.} As $s$ rises, auditability remains pinned to the procurement floor, while mitigation falls, producing documentation without commensurate reductions in incident rates.
    \item \textbf{Independent access shifts operational effort.} Exogenous increases in regulator or third-party evaluation capacity should narrow the gap between vendor-reported and independently measured performance and raise mitigation investment.
    \item \textbf{Portability is sector-dependent.} Its effect should be largest where deployer harm exposure is high and monitoring technology is mature; audit rights should bind more strongly where those conditions are absent.
\end{enumerate}

\subsection{Operationalization}

Auditability can be measured through contractual audit clauses, API evaluation terms, disclosure completeness, and third-party publication rights. Lock-in can be proxied by fine-tuned endpoints, provider-specific tool wrappers, checkpoint or prompt-format portability, retraining costs, and observed switching. Independent exposure can be measured through regulator access, evaluation budgets, researcher-access programs, and enforcement actions. Candidate settings include medical-device AI, credit scoring, public-sector hiring, DSA audit and transparency records for very large online platforms (VLOPs), provider migration and re-tender data, and incident databases supplemented by AI Act Article~73 reports. Appendix~\ref{app:empirical} develops the measurement strategy.

\section{Discussion and Limitations}
\label{sec:discussion}

\paragraph{Audit is not a panacea.}
Third-party audit can itself become proxy compliance. Audits may operate as rational myths---adopted for legitimacy rather than demonstrated efficacy---and auditors and auditees can share an interest in the appearance of rigor \parencite{Suchman1996-mm,Eckstein2024-ll}. In the model, ceremonial or captured auditing contributes little to effective $\Delta$; independence, access, publication rights, and enforcement determine whether the threshold in Theorem~\ref{thm:audit} is reached.

\paragraph{How central is lock-in?}
The mechanism does not require extreme switching costs. Lock-in matters because it disables a monitoring channel that could otherwise substitute for weak independent detection. When deployer harm exposure is low, monitoring is immature, or vendor-independent access is constrained, the system can remain near proxy compliance, even at moderate~$s$. Conversely, credible portability is powerful in high-exposure sectors with mature monitoring.

\paragraph{Modeling scope.}
The model is deliberately static and single-vendor. The title's references to ``scaling'' and ``political economy'' describe the motivating institutional environment, not endogenous scale or market structure. Switching cost, enforcement, and the procurement floor are parameters rather than outcomes of competition or regulatory choice. The common-knowledge risk benchmark omits private information; multidimensional harms, civil-society evidence production, adversarial misuse, multi-vendor competition, and repeated reputation dynamics are left to extensions. These abstractions make the policy channels transparent but limit quantitative interpretation.

\paragraph{Empirical status.}
The numerical example is illustrative rather than calibrated. The comparative statics generate hypotheses and measurement targets, not estimates of policy effect sizes. Cross-sector variation in procurement floors, audit access, and integration depth can support empirical tests; however, those variables may be correlated with underlying risk and institutional capacity; therefore, identification requires careful design rather than a simple cross-sectional comparison.

\section{Conclusion}
\label{sec:conclusion}

Responsible AI commitments can coexist with inadequate mitigation because visible compliance, verifiable evidence, and substantive harm reduction are distinct strategic goals. A vendor that controls ordinary auditability meets the procurement floor; a locked-in deployer monitors less; and evidence-dependent enforcement then supplies too little mitigation incentive. Independent audit rights bypass the weakened monitoring channel, portability restores deployer leverage, incident reporting creates regulator-visible evidence, and outcome-linked exposure retains bite when ordinary detection is weak. None are automatically effective: audit independence, reporting coverage, credible sanctions, and sector-specific monitoring capacity determine whether formal rules change equilibrium behavior. The central policy implication is therefore institutional rather than procedural: accountability requires both verifiability and power to act on what is verified.

\appendix

\section{Mathematical Derivations and Equilibrium Proofs}
\label{app:equilibrium}

Throughout this appendix, consider the interior region $0<e<\theta$ and write
\[
    E(t,m)\equiv \exp(-\alpha t-\beta m)=1-p(t,m),\qquad
A\equiv\frac{\lambda F}{k_e}.
\]

\subsection{Deployer Best Response}

\begin{proof}[Proof of Lemma~\ref{lem:deployer}]
    The deployer's first-order condition is
    \[
        \frac{\partial U_D}{\partial m}
        =-k_m m+\kappa q(\theta,e)\deff\beta E(t,m)=0,
    \]
    which is~\eqref{eq:foc_m}. Its second derivative is
    \[
        -k_m-\kappa q(\theta,e)\deff\beta^2E(t,m)<0,
    \]
    so the solution is unique. Let
    \[
        G(m;t,e,s,\kappa)
        =k_m m-\kappa(\theta-e)\deff\beta E(t,m).
    \]
    At the optimum, use~\eqref{eq:foc_m} to obtain
    \[
        G_m=k_m+\kappa(\theta-e)\deff\beta^2E
        =k_m(1+\beta m^\ast)>0.
    \]
    Implicit differentiation gives
    \begin{align}
        m_e^\ast&=-\frac{m^\ast}{(\theta-e)(1+\beta m^\ast)}<0, \label{eq:me}\\
        m_t^\ast&=-\frac{\alpha m^\ast}{1+\beta m^\ast}<0, \label{eq:mt}\\
        m_s^\ast\big|_{e,t}
        &=-\frac{m^\ast}{(1+s)(1+\beta m^\ast)}<0, \label{eq:ms-partial}\\
        m_\kappa^\ast\big|_{e,t,s}
        &=\frac{m^\ast}{\kappa(1+\beta m^\ast)}>0. \label{eq:mkappa-partial}
    \end{align}
    The first two derivatives describe how the deployer's best response changes when vendor choices change; the latter two are partial institutional effects holding vendor mitigation fixed.
\end{proof}

\subsection{Vendor Auditability and Mitigation}

\begin{proof}[Proof of Lemma~\ref{lem:vendor}]
    \emph{Auditability.} Conditional on adoption, substitute $m^\ast(t,e)$ into~\eqref{eq:vendor}. From~\eqref{eq:mt},
    \[
        \frac{d}{dt}p(t,m^\ast(t,e))
        =p_t+p_m m_t^\ast
        =\frac{\alpha E(t,m^\ast)}{1+\beta m^\ast}>0.
    \]
    Consequently,
    \[
        \frac{dU_V}{dt}
        =-k_t t-\lambda Fq(\theta,e)
        \frac{\alpha E(t,m^\ast)}{1+\beta m^\ast}<0
        \qquad(t\ge\bar t).
    \]
    Once adoption is secured, every further increase in~$t$ raises both cost and expected enforcement exposure. Therefore $t^\ast=\bar t$.
    
    \emph{Mitigation.} Using $q_e=-1$ and~\eqref{eq:me},
    \begin{align*}
        \frac{dU_V}{de}
        &=-k_e e+\lambda F[p(\bar t,m^\ast)+\Delta]
        -\lambda Fq\,p_m m_e^\ast\\
        &=-k_e e+\lambda F\Biggl[
        p(\bar t,m^\ast)+\Delta\\
        &\hspace{4.1em}+(1-p(\bar t,m^\ast))
        \frac{\beta m^\ast}{1+\beta m^\ast}
        \Biggr].
    \end{align*}
    This is~\eqref{eq:foc_e}. The bracket is $H(\bar t,m^\ast)+\Delta$.
    
    For the second-order condition,
    \[
        H_m(t,m)
        =\frac{\beta E(t,m)(2+\beta m)}{(1+\beta m)^2}>0.
    \]
    Because $m_e^\ast<0$,
    \[
        \frac{d^2U_V}{de^2}
        =-k_e+\lambda F H_m(\bar t,m^\ast)m_e^\ast<-k_e<0.
    \]
    Thus the interior solution is the unique vendor optimum; if its unconstrained value reaches~$\theta$, the optimum is the full-mitigation corner.
\end{proof}

\begin{remark}[Vendor participation]
\label{rem:participation}
    The vendor meets the floor when
    \[
        B>\frac{k_t}{2}\bar t^2+\frac{k_e}{2}(e^\ast)^2
        +\lambda F(\theta-e^\ast)[p(\bar t,m^\ast)+\Delta].
    \]
    This holds for sufficiently large adoption benefit~$B$. Otherwise the model's conditional-on-adoption equilibrium is not reached.
\end{remark}

\subsection{Uniqueness and Comparative Responses}

For a fixed institutional environment, substitute the deployer response $m^\ast(e)$ into the vendor condition:
\[
    \Phi(e;\Delta,s)=A[H(\bar t,m^\ast(e,s))+\Delta].
\]
Equation~\eqref{eq:e_star} is $e^\ast=\min\{\theta,\Phi(e^\ast)\}$. Since
\[
    \Phi_e=A H_m m_e^\ast<0,
\]
$\Phi$ is strictly decreasing while the identity is increasing. Hence there is at most one interior crossing. If $\Phi(\theta^-)<\theta$, continuity yields one interior equilibrium; if $\Phi(\theta^-)\ge\theta$, the unique equilibrium is the corner $e^\ast=\theta$, $m^\ast=0$.

Define
\[
    D\equiv1-AH_m m_e^\ast>1.
\]
The equilibrium response to independent audit capacity is
\begin{equation}
\label{eq:delta-response}
    \frac{de^\ast}{d\Delta}=\frac{A}{D}>0.
\end{equation}
The direct fixed-monitoring effect is~$A$; the fall in monitoring attenuates but never reverses it.

At fixed mitigation,~\eqref{eq:ms-partial} gives $\left.m_s^\ast\right|_{e,t}<0$. Therefore
\begin{equation}
\label{eq:s-response}
    \frac{de^\ast}{ds}
    =\frac{A H_m}{D}\left.m_s^\ast\right|_{e,t}<0.
\end{equation}
The total monitoring response is also negative. One direct proof substitutes
$e=A[H(\bar t,m)+\Delta]$ into~\eqref{eq:m_star} and defines
\[
    \begin{aligned}
        \Psi(m,s)=k_m m
        &-\kappa\!\left[\theta-A(H(\bar t,m)+\Delta)\right]\\
        &\quad\times\frac{\delta}{1+s}\beta E(\bar t,m).
    \end{aligned}
\]
Here $\Psi_m>0$ and $\Psi_s>0$, so $dm^\ast/ds=-\Psi_s/\Psi_m<0$.

\begin{proof}[Proof of Proposition~\ref{prop:proxy}]
    As $s\to\infty$, $\deff\to0$. Equation~\eqref{eq:m_star} then forces $m^\ast\to0$, so $H(\bar t,m^\ast)\to H(\bar t,0)=1-\exp(-\alpha\bar t)$. Under the stated weak-enforcement condition, the limiting mitigation is below first best. Equation~\eqref{eq:s-response} shows that mitigation is decreasing in~$s$, so by continuity it remains below first best for all sufficiently large~$s$. When $\bar t$ is small, $1-\exp(-\alpha\bar t)\simeq\alpha\bar t$, making the limiting mitigation correspondingly small.
\end{proof}

\begin{proof}[Proof of Proposition~\ref{prop:account}]
    Part~(a) follows from~\eqref{eq:delta-response}. Part~(b) follows from~\eqref{eq:s-response}. For part~(c), Theorem~\ref{thm:audit} supplies a finite audit threshold for every~$s$; lowering~$s$ raises $H$ through monitoring, weakly reducing the audit intensity needed to reach first best.
\end{proof}

\section{Audit Rights and Other Policy Proofs}
\label{app:policyproofs}

\begin{proof}[Proof of Theorem~\ref{thm:audit}]
    Fix~$s$. Let $m^{FB}(s)$ be the deployer best response when mitigation equals~$e^{FB}$; if $e^{FB}=\theta$, then $m^{FB}=0$. The exact smallest audit boost that makes the vendor weakly prefer first-best mitigation is
    \begin{equation}
    \label{eq:exact-threshold}
        \bar\Delta(s)=
        \max\!\left\{0,\frac{k_e e^{FB}}{\lambda F}-H(\bar t,m^{FB}(s))\right\}.
    \end{equation}
    This threshold is finite. Since $H\ge0$,
    \[
        \bar\Delta(s)\le\frac{k_e e^{FB}}{\lambda F}
    \]
    for every~$s$, yielding the uniform sufficient condition in the theorem. At any proxy-compliance baseline $e^\ast<e^{FB}$, reaching $e^{FB}$ strictly reduces expected harm. In the worked-example corner $e^{FB}=\theta$, the exact threshold is independent of~$s$:
    \[
        \bar\Delta=\frac{k_e\theta}{\lambda F}-[1-\exp(-\alpha\bar t)]
        =0.418731\ldots .
    \]
\end{proof}

\begin{proof}[Proof of Proposition~\ref{prop:portability}]
    Equation~\eqref{eq:s-response} gives $de^\ast/ds<0$. The m-only equation~$\Psi(m,s)=0$ above gives $dm^\ast/ds<0$. Thus reducing~$s$ raises monitoring and mitigation. Since expected harm is $\theta-e^\ast$ in the interior, it falls.
\end{proof}

\begin{proof}[Proof of Proposition~\ref{prop:incident}]
    Let $z_R=\mu\rho$. The reporting regime shifts the vendor condition to
    \[
        e=\Phi(e;\Delta,s)+\frac{z_R}{k_e}.
    \]
    Implicit differentiation yields
    \[
        \frac{de_R^\ast}{dz_R}
        =\frac{1/k_e}{1-AH_m m_e^\ast}>0.
    \]
    The monitoring response attenuates the direct shift but cannot reverse it. The result holds until the full-mitigation corner is reached.
\end{proof}

\begin{proof}[Proof of Proposition~\ref{prop:liability}]
    Let $z_O=\lambda_OF_O$. The outcome-linked term enters the vendor condition additively and independently of $p(t,m)$:
    \[
        e=\Phi(e;\Delta,s)+\frac{z_O}{k_e}.
    \]
    The same implicit derivative as above gives $de_L^\ast/dz_O>0$. The auditability derivative remains negative because $z_O$ does not depend on~$t$, so $t^\ast=\bar t$.
\end{proof}

\section{Worked-Example Calculations}
\label{app:worked}

The worked example uses
\[
    \begin{aligned}
        \theta&=0.6,\qquad k_t=k_e=k_m=1,\qquad \bar t=0.2,\\
        \alpha&=\beta=\kappa=\delta=\lambda F=1.
    \end{aligned}
\]
For an interior setting, solve
\begin{align*}
    m^\ast&=(0.6-e^\ast)\frac{1}{1+s}\exp(-0.2-m^\ast),\\
    e^\ast&=1-\frac{\exp(-0.2-m^\ast)}{1+m^\ast}+\Delta.
\end{align*}
The resulting values before rounding are:
\begin{table}[t]
    \centering
    \caption{Unrounded worked-example solutions.}
    \label{tab:unrounded}
    \scriptsize
    \begin{tabular}{@{}lrrrr@{}}
        \toprule
        Setting & $m^\ast$ & $p(\bar t,m^\ast)$ & $e^\ast$ & $\E[h]$\\
        \midrule
        High lock-in & 0.029573 & 0.205127 & 0.227958 & 0.372042\\
        Portability only & 0.100822 & 0.259791 & 0.327585 & 0.272415\\
        Combined reform & 0 & 0.181269 & 0.600000 & 0\\
        \bottomrule
    \end{tabular}
\end{table}

In the combined setting, even the limiting marginal exposure at $m=0$ is $H(\bar t,0)+\Delta=0.181269+1>0.6$, so mitigation clamps at~$\theta$. With $q=0$, Equation~\eqref{eq:foc_m} then gives $m^\ast=0$. Portability alone reduces expected harm by approximately $26.8\%$ relative to the high-lock-in baseline.

Figure~\ref{fig:regimes} is generated from the same equations. At a target mitigation~$e_0<\theta$, monitoring has the Lambert-$W$ representation
\[
    m(e_0,s)=\frac{1}{\beta}
    W\!\left(
    \frac{\kappa(\theta-e_0)\delta\beta^2e^{-\alpha\bar t}}
    {k_m(1+s)}
    \right),
\]
and the audit boost required to support that target is
\[
    \Delta(e_0,s)=
    \max\!\left\{0,\frac{k_e e_0}{\lambda F}-H(\bar t,m(e_0,s))\right\}.
\]
The plotted contours use $e_0=e^{FB}/2$ and $e_0=e^{FB}$.

\section{Comparative Statics}
\label{app:comparative}

The denominator~$D=1-AH_m m_e^\ast$ is positive. Besides~\eqref{eq:delta-response} and~\eqref{eq:s-response}, the following derivatives hold in the interior:
\begin{align*}
    \frac{de^\ast}{d\bar t}
    &=\frac{A}{D}\,
    \frac{\alpha E(\bar t,m^\ast)}{(1+\beta m^\ast)^3}>0,\\
    \frac{de^\ast}{dk_e}
    &=-\frac{e^\ast}{k_eD}<0,\\
    \frac{de^\ast}{d(\lambda F)}
    &=\frac{e^\ast}{\lambda F D}>0,\\
    \frac{de^\ast}{d\kappa}
    &=\frac{AH_m}{D}\,
    \frac{m^\ast}{\kappa(1+\beta m^\ast)}>0.
\end{align*}
For the $\bar t$ derivative, the direct increase in detection dominates the induced reduction in monitoring:
\[
    \frac{dH}{d\bar t}\Big|_e
    =H_t+H_m m_t^\ast
    =\frac{\alpha E}{(1+\beta m^\ast)^3}>0.
\]

\begin{table}[t]
    \centering
    \caption{Total equilibrium effects in the interior. A dash denotes a decrease, a plus an increase, and zero no
    change. Monitoring effects include the endogenous mitigation response.}
    \label{tab:compstat}
    \small
    \begin{tabular}{@{}lcccc@{}}
        \toprule
        Parameter increase & $m^\ast$ & $t^\ast$ & $e^\ast$ & $\E[h]$\\
        \midrule
        $s$ (lock-in) & $-$ & $0$ & $-$ & $+$\\
        $\Delta$ (independent audit) & $-$ & $0$ & $+$ & $-$\\
        $\lambda F$ (evidence-based exposure) & $-$ & $0$ & $+$ & $-$\\
        $\bar t$ (procurement floor) & $-$ & $+$ & $+$ & $-$\\
        $k_e$ (mitigation cost) & $+$ & $0$ & $-$ & $+$\\
        $\kappa$ (deployer harm exposure) & $+$ & $0$ & $+$ & $-$\\
        \bottomrule
    \end{tabular}
\end{table}

The negative monitoring response to stronger audit or enforcement is not a policy failure: higher mitigation lowers residual harm, reducing the deployer's need to monitor.

\section{Regulatory Mapping in Greater Detail}
\label{app:regulation}

\paragraph{AI Act conformity and model oversight.}
Article~43 of Regulation (EU) 2024/1689 specifies conformity-assessment routes for high-risk systems; Annex~VI sets out internal control, while product-linked systems may follow sectoral third-party procedures. Articles~53 and~55 govern general-purpose model providers and models with systemic risk. Articles~91--93 grant the AI Office powers to request documentation and information, conduct evaluations, and require measures. The model distinguishes provider auditability~$t$ from genuinely independent capacity~$\Delta$: adding documentation to a provider-controlled process need not equal independent verification.

\paragraph{Post-market monitoring and incident reporting.}
Articles~72--73 create post-market monitoring and serious-incident reporting obligations for high-risk systems. Their effective model counterpart is~$\mu\rho$, not the nominal duty alone. Coverage, causal attribution, timeliness, regulator capacity, and sanctions for non-compliance determine~$\rho$ and~$\mu$.

\paragraph{Researcher access and portability analogies.}
DSA Article~40 provides vetted-researcher access for VLOPs and very large online search engines (VLOSEs) and therefore illustrates an access right that can raise independent detection. DMA Articles~5--6 illustrate data-portability and interoperability obligations that can lower switching costs. Neither regime maps mechanically onto foundation-model procurement; they identify the institutional form of the levers.

\paragraph{Liability.}
The withdrawn AI Liability Directive proposal would have eased proof through disclosure and rebuttable presumptions in fault-based claims. It was not automatic liability for every adverse outcome. Directive (EU) 2024/2853 modernizes strict product liability and expressly encompasses software; however, claimants still need to prove a defective product, covered damage, and causation. Proposition~\ref{prop:liability} should therefore be read as a benchmark for exposure that is less dependent on the paper's separate evidence channel, not as a literal description of current EU law.

\section{Operationalization and Empirical Designs}
\label{app:empirical}

\begin{table*}[t]
    \centering
    \caption{Illustrative measures for the model's institutional parameters.}
    \label{tab:measures}
    \small
    \begin{tabular}{@{}p{0.12\textwidth}p{0.34\textwidth}p{0.43\textwidth}@{}}
        \toprule
        Construct & Candidate measures & Candidate data sources or settings\\
        \midrule
        Auditability $t$ &
        contractual audit clauses; API rate and logging terms; reproducibility permissions; model-card and system-card
        completeness; publication restrictions &
        procurement contracts; platform terms; model documentation; DSA audit reports; researcher-access scorecards\\
        Independent capacity $\Delta$ &
        regulator testing rights; third-party access independent of provider permission; evaluation staffing and budgets;
        ability to publish adverse findings &
        AI Office and national authority records; AI Safety Institute programs; DSA Article~40 access decisions\\
        Lock-in $s$ &
        provider-specific fine-tuning artifacts; prompt and tool wrappers; data-egress and retraining costs; switching and
        re-tender rates &
        machine-learning operations (MLOps) telemetry; procurement records; vendor-migration projects; surveys of deployers\\
        Monitoring $m$ &
        continuous evaluation frequency; red-team budget; incident triage; external audit expenditure &
        internal governance records; audit contracts; safety-case updates; regulated-sector supervision\\
        Mitigation $e$ &
        patching and retraining effort; safety engineering headcount; post-incident remediation; independently measured
        performance improvements &
        change logs; technical reports; incident investigations; independent benchmarks\\
        Reporting $\rho$ &
        report completeness, timeliness, scope, and match between internal incidents and regulator filings &
        AI Act Article~73 reports when available; sectoral incident registers; AI Incident Database\\
        \bottomrule
    \end{tabular}
\end{table*}

Three empirical designs are particularly natural. First, procurement-threshold changes can be studied with difference-in-differences or event-study designs when comparable sectors adopt requirements at different times. Second, provider migrations or interoperability mandates create settings for measuring whether reduced lock-in raises monitoring and remediation. Third, changes in independent access rights can be linked to the gap between vendor-reported and independently reproduced performance. In each case, underlying sector risk and institutional capacity are likely confounders; the model identifies directional hypotheses, not a ready-made causal instrument.

\section{Modeling Choices and Extensions}
\label{app:scope}

\paragraph{Common-knowledge risk.}
Conditioning on~$\theta$ isolates evidence scarcity from adverse selection. A private-information extension would let the vendor signal risk through $(t,e)$ and could add pooling or separating equilibria, but it is not required for the proxy-compliance mechanism.

\paragraph{Static, single-vendor structure.}
Repeated interaction could make reputation an additional enforcement channel or, conversely, institutionalize symbolic compliance. A multi-vendor model would endogenize switching opportunities and market concentration. The present~$s$ summarizes these forces rather than deriving them.

\paragraph{Harm technology.}
The linear specification makes the monitoring response and policy thresholds transparent. More general decreasing harm functions preserve the mechanism when the vendor objective remains concave and the deployer best response is well-behaved, but the threshold formulas change.

\paragraph{Non-market evidence producers.}
Journalists, civil society organizations, and academic researchers can be represented as contributors to~$\Delta$. Their access, publication incentives, and vulnerability to legal or contractual restriction are themselves strategic objects and deserve a separate model.

\paragraph{Scaling and political economy.}
The paper's title identifies the application domain. Scale, compute concentration, and regulatory capacity are not endogenous primitives. They motivate why audit access and switching costs may be consequential, but the formal results should not be read as comparative statics in model scale.

\section{Welfare Effects}
\label{app:welfare}

Treating penalties as transfers and abstracting from policy-administration costs, define welfare conditional on adoption as
\begin{equation}
\label{eq:welfare}
    W=V-(\theta-e^\ast)-\frac{k_e}{2}(e^\ast)^2
    -\frac{k_t}{2}\bar t^2-\frac{k_m}{2}(m^\ast)^2.
\end{equation}
Because the deployer best response depends on policy only through equilibrium mitigation for an audit change, $m_\Delta^\ast=m_e^\ast e_\Delta^\ast$. Therefore
\[
    \frac{dW}{d\Delta}
    =\left[(1-k_e e^\ast)-k_m m^\ast m_e^\ast\right]
    \frac{de^\ast}{d\Delta}.
\]
When $e^\ast<e^{FB}$, $1-k_e e^\ast>0$; because $m_e^\ast<0$ and $de^\ast/d\Delta>0$, the audit effect is strictly positive. Stronger audit capacity raises beneficial mitigation and reduces the monitoring cost required to manage residual harm.

For switching costs,
\[
    \frac{dW}{ds}
    =(1-k_e e^\ast)\frac{de^\ast}{ds}
    -k_m m^\ast\frac{dm^\ast}{ds}.
\]
Both derivatives are negative. A marginal reduction in~$s$ improves welfare precisely when
\[
    (1-k_e e^\ast)\left|\frac{de^\ast}{ds}\right|
    >
    k_m m^\ast\left|\frac{dm^\ast}{ds}\right|.
\]
The left side is the net benefit from additional mitigation; the right side is the extra monitoring resource cost. This condition replaces the stronger and generally unjustified claim that portability is always welfare-improving.

\section*{Ethical Statement}
This paper develops a stylized institutional model and uses no human participants, personal data, or deployed AI systems. Its principal adverse-impact risk is overgeneralization: the comparative statics do not establish that any specific firm, sector, or legal regime necessarily exhibits proxy compliance, nor that audit, reporting, portability, or liability is sufficient on its own. We therefore separate the model's benchmark mechanisms from claims about current law and stress that implementation depends on audit independence, due process, confidentiality safeguards, civil-society access, distributional effects, and regulator capacity.

\singlespacing
\printbibliography

\end{document}